\newtheorem{theorem}{Theorem}
\newtheorem*{theorem*}{Theorem}
\newtheorem{Lemma}[theorem]{Lemma}
\DeclareMathOperator{\Ker}{Ker}
\begin{document}
\title{Asymptotically Decreasing Lieb-Robinson Velocity for a Class of Dissipative Quantum Dynamics} 
\author{Beno\^it Descamps}%

\email{benoit.descamps@univie.ac.at}
\affiliation{Faculty of Physics, University of Vienna, Austria}%

\date{\today}

\begin{abstract}
We study the velocity of the propagation of information for a class of local dissipative quantum dynamics. This finite velocity is expressed by the so-called Lieb-Robinson bound. Besides the properties of the already studied dynamics, we consider an additional relation that expresses the propagation of certain subspaces. The previously derived bounds did not reflect the dissipative character of the dynamics and yielded the same result as for the reversible case. In this article, we show that for this class the velocity of propagation of information is time dependent and decays in time towards a smaller velocity. In some cases the velocity becomes zero. At the end of the article, the exponential clustering theorem of general frustration free local Markovian dynamics is revisited.
\end{abstract}

\maketitle

\section{Introduction}
Locality and causality are two of the most fundamental concepts in physics. Measurements are realised in a small subset of space during a small interval of time. Two measurements far away from each other should not influence each other for a long period of time. In other words, information propagates with a finite velocity. This is implied by the local Lorentz invariance or covariance. In 1972, \citet{LIEB} showed that this finiteness is inherited by locality and not the property of local symmetry. This result was shown using local Hamiltonians and is known as the Lieb-Robinson bound. Since then the Lieb-Robinson bound has turned out to be a very useful tool in many areas of modern condensed matter such as the approximation of ground states \cite{2007PhRvA..75d2306O}, the study of gapped Hamiltonians and area laws \cite{PhysRevLett.93.140402,2006PhRvB..73h5115H,2007JSMTE..08...24H}, topological properties \cite{PhysRevB.72.045141}, and others \cite{2007PhRvL..99p7201B,2009PhRvL.102a7204H,2009CMaPh.286.1073N,2009JMP....50i5207H}, and even linear algebra \cite{almostcomm}. An overview of many applications such as the study of gapped systems can be found in \cite{2010arXiv1004.2086N,2011arXiv1102.0835N,2010arXiv1008.5137H}. Such a light cone property in matter has been studied experimentally \cite{cheneau2012lig}.
Local dissipative systems have been studied in the context quantum computing \cite{citeulike:5219312,Kraus,2011PhRvL.107l0501K} or study of phase transitions \cite{PhysRevA.87.012108}. 

Recently this bound was studied for systems with local dissipation \cite{2011arXiv1103.1122N,PhysRevLett.104.190401}. The derived bound was very similar to the original result for non-dissipative systems. In this article, we revisit and derive a class of local dynamics with a more refined bound. Local dissipation implies a local de-coherence  of states in time. Imposing an additional constraint, we show that certain systems have a decaying velocity of propagation,i.e. the light cone closes with time. 
In the first part of the article, we define the subclass and prove the new bound. We then give some examples. At the end, the method is prove the conjecture in \cite{BOE}.
 Finally, we discuss the exponential clustering theorem for frustration free local Markovian dynamics.
\section{Local Dissipative Dynamics}
\subsection{Propagation of Information}
This section explains the connection between the Lieb-Robinson bound and propagation of information.
Assume that a system is initially prepared in a state $\rho$. This state, then, evolves time under the dynamics $\Gamma_t$. The fact that any system can be taken to be part of a larger one automatically implies that the map $\Gamma_t$ needs to be complete positive and trace-preserving. By the same argument, any change to a state of a system should be described by a complete positive trace-preserving map $\Phi$. Any complete positive map has a special decomposition \cite{citeulike:7291175},
$$\Phi: \rho \to \Phi[\rho]=\sum_i \operatorname{Tr}(V_i V_i^\dagger \rho) \frac{V_i \rho V_i^\dagger}{ \operatorname{Tr}V_i^\dagger V_i \rho)}$$
This decomposition interprets $\Phi$ as a change to $\rho$ due to the different measurement outcomes $\frac{V_i \rho V_i^\dagger}{ \operatorname{Tr}(V_i V_i^\dagger \rho)}$ with probabilities  $\operatorname{Tr}(V_i V_i^\dagger \rho)$. Any measurement can be described in a similar way using a set of matrices $\{V_i\}$ so that $\sum_i V_i^\dagger V_i=\mathbb{1}$.
Let us now say that at $t=0$, we apply some local change,$\Phi_x$, at some point $x$ in space. After some time $t$,  we make a local measurement,  $V_y^\dagger$,  at another point $y$ of space. A change is detected by a difference in the probabilities of the outcome. Maximizing over all initial states gives us a bound on how local changes are being perceived at other points in space.
This yields us the following, 
\begin{equation}
|\operatorname{Tr}(V_y[.]V_y^\dagger\circ \Gamma_t\circ \Phi_x [\rho])-\operatorname{Tr}(V_y[.]V_y^\dagger \circ \Phi_x [\rho])|\leq \|(\Phi_x^*[.]-[.])\circ \Gamma^*_t[V_y^\dagger V_y]|\
\label{eq1}
\end{equation}
 where $\Phi_x^*, \Gamma_t^*$ are the adjoint with respect to the Hilbert-Schmidt scalar product.
The operator $\Phi_x^*[.]-[.]$ can be shown to be the generator of an unital Markovian dynamics, i.e. $\forall t\geq 0, \exp[t(\Phi_x[.]-[.])]$ is trace-preserving and complete positive. The case $\|(\Phi_x^*[.]-[.])\circ \Gamma^*_t[V_y^\dagger V_y]\|\propto \|[B_x,.]\circ \Gamma^*_t[V_y^\dagger V_y]\|$, is bounded by the so-called the Lieb-Robinson bound stated in equation (\ref{LRB}) below. However, for proving the existence of the thermodynamic limit of local Markovian dynamics, arbitrary local Lindblad generators are needed.

Obviously, for $t=0$ we see that the upper bound in equation (\ref{eq1}) is always zero. A question that is often asked is what happens for $t>0$.
The Lieb-Robinson bound shows that the change is progressive. For a fixed time, as the measurement is done farther away, it becomes harder and harder to perceive the perturbation. Measurements that detect the change at some time $d$, can be made at least at a distance $d+v \Delta t$ after some time $\Delta t$ with the same accuracy of the results.  In other words,  the propagation of the information is inside a light cone.
 In this article, we show that due to dissipation, we get event horizons.
\section{Local Markovian Dynamics}
Consider a $D$-dimensional lattice $\mathbb{Z}^D$ with a metric. At each point $x\in \mathbb{Z}^D$ of the lattice, define a $d$-dimensional Hilbert space $\mathcal{H}_x$ and for each finite set $\Lambda \subset  \mathbb{Z}^D$ denote the product space,
$$\mathcal{H}_\Lambda = \bigotimes\limits_{x\in \Lambda}\mathcal{H}_x$$
Denote the algebra of all matrices acting on $\mathcal{H}_\Lambda$, i.e. the algebra of local observables of $\Lambda$, $\mathcal{A}_\Lambda$,
$$\mathcal{A}_\Lambda = \bigotimes\limits_{x\in \Lambda}\mathcal{B}(\mathcal{H}_x)$$ 
If $\Lambda_1 \subset \Lambda_2$, the algebra $\mathcal{A}_{\Lambda_1}$ can be identified with the algebra $\mathcal{A}_{\Lambda_1}\otimes \mathbb{1}_{\Lambda_1 \backslash \Lambda_2}$ and therefore  $\mathcal{A}_{\Lambda_1} \subset \mathcal{A}_{\Lambda_2}$.
Define the support of a local observable $A\in \mathcal{A}_\Lambda$ as the minimal set $X\subset \Lambda$ for which $A=A'\otimes \mathbb{1}_{\Lambda\backslash X}$.

A natural norm to consider in this type of systems is the so-called completely bounded norm. The cb-norm of an operator $L$ is defined by, 
$$\|L \|_{cb} =\underset{n\geq 1}{\operatorname{sup}}\|L \otimes \operatorname{id}_{\mathcal{M}_n} \| $$
As shown originally, the Lieb-Robinson velocity is proportional to the norm the local interaction. This proportionality can be made more precise using reproducing functions. 
We assume there exists a non-increasing function $F:[0,\infty)\to (0,\infty)$ so that $F$ is uniformly integrable over the lattice,

$$\| F \| := \underset{x\in \mathbb{Z}^d}{\operatorname{sup}}\sum\limits_{y\in \mathbb{Z}^d}F(d(x,y))<\infty$$

and there is a constant $C_{\mu}$,

$$C_\mu := \underset{x,y\in \mathbb{Z}^d}{\operatorname{sup}}\sum\limits_{z\in \mathbb{Z}^d}\frac{F(d(x,y))F(d(y,z)}{F(d(x,y)}e^{-d(x,y)-d(y,z)+d(x,y)}$$
so that $C_{\mu}\big|_{\mu=0}<\infty$.

Further information about reproducing functions can be found in \cite{KomatHastings}.
We study local one-parameter dynamics $\Gamma_t: \mathcal{A}_{\Lambda} \to  \mathcal{A}_{\Lambda}$ so that $\forall A\in  \mathcal{A}_{\Lambda}$,
$$\frac{d}{dt}\Gamma_t (A)\big|_{t=0}=L(A)=\sum_X I_X(A) $$
with local operators $I_X:\mathcal{A}_{\Lambda}\to \mathcal{A}_{\Lambda}$, $\forall A\in\mathcal{A}_Y, I_X[A_Y]=0$ , if $X\cap Y =\emptyset$, and  $\operatorname{diam}(X)\leq R <\infty$.
Additionally we impose that $\forall t\geq 0$ and $\forall X$,
$$\exp[tI_X]:\mathcal{A}_X \to \mathcal{A}_X$$
is identity preserving and complete positive.
Therefore $L$ and each $I_X$ are the Heisenberg picture equivalent of the Lindblad generator \cite{lind1976} of a continuous one-parameter semi-group of complete positive trace-preserving maps. In the theorem, we impose $I_X^*[\mathbb{1}]=0$. Notice that the complete positivity implies $I_X+I_X^* \leq 0$.

Additionally assume there exists a $\mu>0$, so that,

$$\|L\|_{\mu}= \underset{x,y\in \mathbb{Z}^d}{\operatorname{sup}}\sum\limits_{Z \ni x,y} \frac{\|I_Z\|_{cb}}{F(d(x,y)} e^{\mu d(x,y)}<\infty ,\mbox{and }~~\lambda_{I_X+I_X^*}/2 \geq -\nu$$

where $\lambda_{T}$ denotes the largest non-zero eigenvalue of operator $T$ and $\nu >0$.  The existence of the thermodynamic limit of such systems was shown in \cite{2011arXiv1103.1122N}.

Consider $\Lambda_1, \Lambda_2 \subset \mathbb{Z}^D$, $\Lambda_1 \cap \Lambda_2 =\emptyset$ and $A\in \mathcal{A}_{\Lambda_1}$ and $B\in \mathcal{A}_{\Lambda_2}$.
In the case that $\Gamma_t$ is automorphic, it has already been shown that there exists $C',v$ and $\xi$, \cite{LIEB,2006JSP...124....1N,PhysRevLett.104.190401} so that,
\begin{equation}
\|[B,\exp[tL]A]\|\leq C' \|A\|\ \|B\|\ \exp\left[\frac{vt-d(A,B)}{\xi}\right]
\label{LRB}
\end{equation}
These constants depend on the diameter of the support of $A$ and $B$ and the strength of the interaction. This bound, of course,  should also always be understood for small times compared to the distance. 
Notice, that in the case that $L$ has a unique fixed point, in principle by taking the right observable $A$, we should get something like,
$$\|[B,\exp[tL]A]\|\leq C' \|A\|\ \|B\| \exp[-t\operatorname{Gap}(L)]$$
This bound makes sense for all times, but does not contain any information about the distance between observables.
We will reconcile and combine both properties in the following result.
We show that for a certain class of Lindblad generators, the usual light-cone picture is only valid for small $t$. After a certain time, the dissipative character takes over and the velocity $v>0$ becomes time dependent and decreases towards some smaller velocity which can be zero.

Define $P_X$ as the projector onto $\Ker(I_X + I_X^*
)$. In order to derive our desired result, we need the following property,

\begin{align}
P_Y I_X (\mathbb{1}- P_Y)=0,~~P_Y P_X (\mathbb{1}-P_Y)=0,~~~\forall X,Y \subset \mathbb{Z}^d
\label{structure}
\end{align}

Notice that since projectors are self-adjoint, the second equality implies the mutual commutation relations between the projectors,

On the other hand, the first inequality does not necessarily implies that the projectors and the interaction terms commute. If this were the case, it can be easily shown that dynamics becomes ultra-local, i.e. the support of local operators does not grow under time-evolution.

 Per construction, the invariant state of the system is the uniform state. Let us take a local perturbation at the origin that can be dissipated by the interaction term that overlaps with origin. Under time evolution, this perturbation is then propagated by the other interactions. However, the equations (\ref{structure}) tell us that as the perturbation spreads, it is dissipated at a faster rate.
\begin{theorem*}
  Let $L$ be a local Lindblad generator. Given that for each local interaction $I_X$, $I_X^*[\mathbb{1}]=0$ and the equations (\ref{structure}) are satisfied. Then $\forall A\in \mathcal{A}_{\Lambda_A}, B\in \mathcal{A}_{\Lambda_B}$ with $\Lambda_A \cap \Lambda_B=\emptyset$,
\begin{equation}
\|[\exp(tL)A,B]\|\leq 2 \frac{\|F\|}{C_{\mu}}\|A\|\ \|B\|\ \operatorname{min}\{\Lambda_A,\Lambda_B\}\left(\exp\left[\int_0^t v(\tau) d\tau\right]-1\right)\exp \left[ -  \mu d(A,B)\right]
\label{eventh}
\end{equation} 
with the time-dependent velocity $v(t)$,
$$v(t)= \left[\alpha +\beta e^{-\nu t}\right] C_{\mu} \|L\|_{\mu}$$
with $\nu =-\max_X \lambda_{I_X+I_X^*}/2$.
$$ \alpha = \max_X \| I_X P_X  \|_{cb}/\|I_X\|_{cb}~~\mbox{ and }\beta = \max_X\|  I_X (\mathbb{1}-P_X)\|_{cb}/\|I_X\|_{cb}$$
\end{theorem*}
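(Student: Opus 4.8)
The plan is to run the usual Lieb--Robinson iteration adapted to Lindbladians, but to split every propagation step with the projectors $P_X$ so that the dissipative (non-kernel) component of a propagated operator carries an explicit damping factor. First I would extract three consequences of the hypotheses. From $P_YP_X(\mathbb{1}-P_Y)=0$ and self-adjointness one gets $[P_X,P_Y]=0$, so the family $\{P_X\}$ can be handled simultaneously. The relation $P_YI_X(\mathbb{1}-P_Y)=0$ says precisely that $\operatorname{Ran}(\mathbb{1}-P_Y)=\Ker P_Y$ is invariant under every $I_X$, hence under $L$ and under $\Gamma_t=\exp(tL)$. Finally, for $W\in\operatorname{Ran}(\mathbb{1}-P_Y)$ I would differentiate the Hilbert--Schmidt norm, $\frac{d}{dt}\|\Gamma_tW\|_2^2=\langle\Gamma_tW,(L+L^*)\Gamma_tW\rangle$; since $\Gamma_tW$ remains in this subspace, where $I_Y+I_Y^*\le-2\nu$ while every other $I_X+I_X^*\le0$, a Gr\"onwall estimate yields the dissipation bound $\|\Gamma_tW\|_2\le e^{-\nu t}\|W\|_2$. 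This is the quantitative content of the remark preceding the theorem: whatever leaves the common kernel is damped at rate $\nu$.

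Next I would set up the recursion around the support $X=\Lambda_A$ of $A$. Writing $\mathcal L_X=\sum_{Z\cap X=\emptyset}I_Z$ and using $I_Z[\mathbb{1}]=0$, one has $\exp(t\mathcal L_X)A=A$, so Duhamel gives
\[
\Gamma_tA=A+\int_0^t\Gamma_{t-s}\Big(\sum_{Z\cap X\neq\emptyset}I_ZA\Big)\,ds.
\]
Because $\Lambda_A\cap\Lambda_B=\emptyset$ the term $[A,B]$ vanishes, and after the substitution $s\mapsto t-s$ and the contraction property of the unital completely positive maps $\Gamma_s$ I obtain an integral inequality for $C_B(X,t):=\sup_{A\in\mathcal A_X,\|A\|\le1}\|[\Gamma_tA,B]\|$. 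In each term I would split $I_ZA=I_ZP_ZA+I_Z(\mathbb{1}-P_Z)A$. The first piece is bounded by $\alpha\|I_Z\|_{cb}\|A\|$ and fed back into $C_B(X\cup Z,\cdot)$ at full strength. The second piece lies in $\operatorname{Ran}(\mathbb{1}-P_Z)$, by the structure relation with $Y=X=Z$, so the dissipation estimate contracts its subsequent evolution by $e^{-\nu s}$ and it is bounded by $\beta\,e^{-\nu s}\|I_Z\|_{cb}\|A\|$. Iterating produces a sum over chains $Z_1,\dots,Z_n$ joining $\Lambda_A$ to $\Lambda_B$, each link weighted by $(\alpha+\beta e^{-\nu s_i})\|I_{Z_i}\|_{cb}$.

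Finally I would resum. The spatial sums of $\|I_Z\|_{cb}$ are controlled through the reproducing function via $\|L\|_\mu$ and $C_\mu$: telescoping the weights $e^{\mu d}$ along a chain leaves a single factor $e^{-\mu d(A,B)}$ and the constant $2\|F\|/C_\mu$, while the number of seeds at the first step contributes $\min\{\Lambda_A,\Lambda_B\}$ (peeling around $A$ or, dually under $\Gamma^*$, around $B$, whence the minimum). The nested time-ordered integrals $\int_{0\le s_n\le\cdots\le s_1\le t}\prod_i(\alpha+\beta e^{-\nu s_i})C_\mu\|L\|_\mu\,ds$, being integrals of a symmetric product over the simplex, equal $\frac{1}{n!}\big(\int_0^t v(\tau)\,d\tau\big)^n$ with $v(\tau)=[\alpha+\beta e^{-\nu\tau}]C_\mu\|L\|_\mu$; summing over $n\ge1$ gives $\exp[\int_0^t v(\tau)\,d\tau]-1$, the $-1$ reflecting the vanishing zeroth-order term. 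This reproduces the stated inequality.

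I expect the main obstacle to be the tension between two norms. The dissipation estimate is natural in the Hilbert--Schmidt geometry in which $P_Z$ and the spectral gap $2\nu$ live, whereas the recursion and the constants $\|I_Z\|_{cb}$, $C_\mu$ are phrased in operator and cb norm. Keeping the damped component of each propagated operator identifiably inside $\operatorname{Ran}(\mathbb{1}-P_Z)$ throughout the iteration, so that $C_B(X,t)$ does not \emph{forget} the subspace and silently restore full strength, and controlling the finite-dimensional norm conversions on $\mathcal A_{X\cup Z}$ without spoiling the geometric decay, is the delicate step; ensuring that the additive split $\alpha+\beta e^{-\nu s}$ survives the resummation rather than generating cross terms is exactly where the structure equations must be used most carefully.
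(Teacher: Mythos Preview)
Your strategy---Duhamel iteration into chains connecting $\Lambda_A$ and $\Lambda_B$, splitting each $I_Z$ as $I_ZP_Z+I_Z(\mathbb{1}-P_Z)$, and resumming the symmetric product over the simplex---is exactly the route the paper takes, and your resummation and the appearance of $\exp[\int_0^t v(\tau)\,d\tau]-1$ are correct. Two points, however, distinguish a sketch from a proof, and they are precisely the obstacle you flag at the end.

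First, the norm tension is not resolved in your outline. Your damping estimate $\|\Gamma_sW\|_2\le e^{-\nu s}\|W\|_2$ lives in Hilbert--Schmidt, while the iteration, the constants $\|I_Z\|_{cb}$, and the final commutator are in operator norm; passing back and forth costs dimension factors that grow with the support along the chain and destroy the bound. The paper avoids this entirely by staying in operator norm: for any projector $P$ one has $\|P\exp(PAP)P\|\le\|P\exp(P\tfrac{A+A^\dagger}{2}P)P\|$ (a variant of Bhatia's Thm.~IX.3.1), so the damping of $e^{sL^c}$ on the appropriate invariant subspace can be read off directly from the spectrum of the Hermitian part $L^c+L^{c*}$, with no detour through $\|\cdot\|_2$.

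Second, your recursive quantity $C_B(X,t)$ is a supremum over all $A\in\mathcal A_X$, so it necessarily forgets which projector ranges the operator already inhabits; the accumulated damping (rate $\nu$ for each $(\mathbb{1}-P_{\Lambda_j})$ along the chain) cannot survive that. The paper therefore does \emph{not} close a recursion but expands fully into paths $\mathcal P_n$ first. For each binary string $(i_1,\dots,i_n)$ it inserts the compound projector $Q^{\{i_k,\dots,i_n\}}=\prod_{j\ge k}(\mathbb{1}-i_jP_{\Lambda_j})$, uses the structure equations to show that $Q$ commutes past the restricted generators $L^c_{\mathcal P_n^{k-1}}$, and then applies the Bhatia bound to obtain $\|Q\,e^{s_kQL^cQ}\,Q\|\le\exp\bigl[-\nu s_k\sum_{j\ge k}i_j\bigr]$. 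Summing over $i_j\in\{0,1\}$ at each step reproduces exactly your factor $\alpha+\beta e^{-\nu u_j}$ with $u_j$ the cumulative time, and the rest goes as you describe. So the missing ingredients are the full path expansion with compound projectors and the operator-norm damping via the Bhatia inequality; once those are in place your argument becomes the paper's.
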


\begin{proof}
The proof goes as follows. First, we find a meaningful integral representation for $[\exp(tL)A,B]$. This representation shows that in order for the two observables $A$ and $B$ to communicate with other, they need to be propagated by the local interactions. Second, we see that the local interactions not only play the role of propagator but they are also dissipators. This role is expressed by the relation (\ref{structure}). If, after some propagations, the observable is being dissipated at a rate $\lambda$, under the action of another interaction term the rate becomes $\lambda + \nu$.
These two ideas are then combined to give the result.

We remind the reader of the property of the exponential of a sum.
\begin{equation}
\label{exp1}
e^{t(A+B)}=e^{tA}+\int_0^t ds\  e^{(t-s)A} B e^{s(A+B)}
\end{equation}
and 
\begin{equation}
\label{exp2}
e^{t(A+B)}=e^{tA}+\int_0^t ds\  e^{s(A+B)} B e^{(t-s)A}
\end{equation}
Originally the dynamics \cite{PhysRevB.69.104431} was recursively expanded using the first property (\ref{exp1}), Instead of writing down the full integral representation, a recursive inequality was used.
 In our case we use the second property (\ref{exp2}).

Define a path of  $n$ steps, $\mathcal{P}_n$, between $A$ and $B$ as the set of the support of the local interactions $I_{\Lambda_j}$,
$$\mathcal{P}_n=\{\Lambda_j|\Lambda_{B} \cap \Lambda_1 \not=\emptyset , \Lambda_1 \cap \Lambda_2 \not=\emptyset,...,\Lambda_n \cap \Lambda_{A}  \not=\emptyset \mbox{ , all other intersections are empty}\}$$ 
Denote $\mathcal{P}_n^k $ as the subset of 
$\mathcal{P}_n$ containing the first $k$ steps of the path, where $\mathcal{P}_n^0 =\emptyset $. 
Define the generator containing local interaction which do not intersect with the path,
$$L^c_{\mathcal{P}_n^k}= L-\sum_{Z\cap \mathcal{P}_n^k\not= \emptyset } I_Z $$

In Lemma 1 we show,
\begin{equation*}
\|[B, \exp(tL)A]\| \leq  \sum_{n=1}^\infty \|L\|_\mu^n \left(\sup\limits_{\mathcal{P}_n}\mathcal{J}_{\mathcal{P}_n}\right) \sum\limits_{\mathcal{P}_n}\sum\limits_{x_0\in \Lambda_B}\prod_{\Lambda_j \in \mathcal{P}_n}\sum_{x_j \in \Lambda_j}\sum\limits_{x_{n+1} \in \Lambda_A}F(d(x_{j-1},x_j)) e^{-\mu d(x_{j-1},x_j)}
\end{equation*}
with,
\begin{align}
&\mathcal{J}_{\mathcal{P}_n} = \Big|\Big|\int_0^t \int_0^{t-s_1
}...\int_0^{t-\sum_{j=1}^{n-1}s_j} d\textbf{s}\ [B,.]  \  \prod_{j=1}^{n}\left(\exp\left[s_j L^c_{\mathcal{P}_n^{j-1}}\right]  I_{\Lambda_j}/\|I_{\Lambda_j}\|_{cb}\right) \exp[(t-\sum_{j=1}^{n}s_j) L^c_{\mathcal{P}_n^n}]A \Big|\Big|
\label{Jpn}
\end{align}
with the ordered product $\prod_{j=1}^n C_j=C_1 C_2 ... C_n$ and the supremum is taken over all paths of length $n$.

Further on we show for fixed $n$, 
\begin{align*}
&\mathcal{J}_{\mathcal{P}_n} \leq 2 \|A\|\ \|B\|\frac{\left[\max_X \| I_X P_X\|/\|I_{X}\|_{cb}+ \max_X\|I_X (\mathbb{1}-P_X)\|/\|I_{X}\|_{cb} (1-e^{-\nu t})/(\nu t)\right]^n }{n!} t^n
\end{align*}

From the definition of $C_\mu$ and the triangle inequality, it can be seen that,
\begin{equation*}
\sum_{\mathcal{P}_n}\sum_{x_0\in \Lambda_B}\prod_{\Lambda_j \in \mathcal{P}_n}\sum_{x_j \in \Lambda_j}\sum_{x_{n+1} \in \Lambda_A}F(d(x_{j-1},x_j))e^{-\mu d(x_{j-1},x_j)} \leq \frac{\|F\|}{C_\mu} \min\{|\operatorname{supp}A|,|\operatorname{supp}B|\} \exp[-\mu d(A,B)] 
\end{equation*}

Define the projector,
$$P_{\Lambda}^{k}=\left\{\begin{array}{c}  P_\Lambda ,\mbox{ if $k=0$}\\
\mathbb{1}-P_\Lambda,\mbox{ if $k=1$} \end{array}\right.$$
where $P_\Lambda$ is the projector on $\Ker(I_\Lambda+I_\Lambda^*)$. 
From condition (\ref{structure}), we can rewrite the expression,
\begin{align*}
&\mathcal{J}_{\mathcal{P}_n} \leq 2\|A\|\ \|B\| \int_0^t \int_0^{t-s_1
}...\int_0^{t-\sum_{j=1}^{n-1}s_j}d\textbf{s}\    \Big|\Big|\prod_{\Lambda_j\in\mathcal{P}_n} \left(\exp\left[s_j L^c_{\mathcal{P}_n^{j-1}}\right] I_{\Lambda_j}/\|I_{\Lambda_j}\|_{cb}(P^0_{\Lambda_j} + P^1_{\Lambda_j}) \right)\Big|\Big|
\end{align*}
This equation consists of $2^n$ terms. For each string of bits of length $n-k+1$ $\{i_k,...,i_n\} \subset \{0,1\}^k$ for $k=1,...,n$, define the projector $$Q^{\{i_k,...,i_n\}}_{\mathcal{P}_n^{k-1}}=\prod_{\Lambda_j \in \mathcal{P}_n \backslash  \mathcal{P}_n^{k-1}}(\mathbb{1}-i_j P_{\Lambda_j})$$
From condition (\ref{structure}) we see that this is indeed a projector. Moreover using (\ref{structure}), we can rewrite our expression,
\begin{align*}
&\mathcal{J}_{\mathcal{P}_n}\leq 
 2\|A\|\ \|B\| \sum_{\{i_1,...,i_n\}  \subset \{0,1\}^n} \mathcal{R}^{\{i_1,...,i_n\}}_{\mathcal{P}_n} \mathcal{S}^{\{i_1,...,i_n\}}_{\mathcal{P}_n}
\end{align*}
with,
\begin{align*}
&\mathcal{R}^{\{i_1,...,i_n\}}_{\mathcal{P}_n}= \prod_{{\Lambda_k\in\mathcal{P}_n}}\left(\|I_{\Lambda_k}P^{i_k}_{\Lambda_k}\|_{cb}/\|I_{\Lambda_k}\|_{cb} \right) \\
&\mathcal{S}^{\{i_1,...,i_n\}}_{\mathcal{P}_n} =\int_0^t \int_0^{t-s_1
}...\int_0^{t-\sum_{j=1}^{n-1}s_j}d\textbf{s}\prod_{{\Lambda_k\in\mathcal{P}_n}} \Big|\Big|Q^{\{i_k,...,i_n\}}_{\mathcal{P}_n^{k-1}}\exp\left[s_k Q^{\{i_k,...,i_n\}}_{\mathcal{P}_n^{k-1}}L^c_{\mathcal{P}_n^{k-1}}Q^{\{i_k,...,i_n\}}_{\mathcal{P}_n^{k-1}}\right]Q^{\{i_k,...,i_n\}}_{\mathcal{P}_n^{k-1}}\Big|\Big|
\end{align*}
We use a variation of \citep[][Thm IX.3.1]{bhatia1997matrix}. 
Given $A,P\in \mathcal{M}_n(\mathbb{C})$ and  $P$ projector,
\begin{equation*}
\|P \exp[PAP]P\|\leq \Big|\Big|P \exp\left[P\frac{A+A^\dag}{2}P\right]P\Big|\Big|\end{equation*}
This equation and the definition of $Q^{\{i_k,...,i_n\}}_{\mathcal{P}_n}$, we see that we can bound each terms in $\mathcal{S}^{\{i_1,...,i_n\}}_{\mathcal{P}_n}$,
\begin{align*}
 \Big|\Big|Q^{\{i_k,...,i_n\}}_{\mathcal{P}_n^{k-1}}\exp\left[\frac{1}{2}s_k Q^{\{i_k,...,i_n\}}_{\mathcal{P}_n^{k-1}}(L^c_{\mathcal{P}_n^{k-1}}+L^{c*}_{\mathcal{P}_n^{k-1}})Q^{\{i_k,...,i_n\}}_{\mathcal{P}_n^{k-1}}\right]Q^{\{i_k,...,i_n\}}_{\mathcal{P}_n^{k-1}}\Big|\Big|
 \leq  \exp\left[-\nu s_k \sum_{j=k }^n i_j\right]
\end{align*}

Defining new variables,
\begin{align*}
u_1=s_1,~~ u_2=s_1+s_2,~~...~~,u_k=\sum_{j=1}^k s_j~~...
\end{align*}
We can rewrite $\mathcal{S}^{\{i_1,...,i_n\}}_{\mathcal{P}_n}$,
\begin{equation*}
\mathcal{S}^{\{i_1,...,i_n\}}_{\mathcal{P}_n} = \int_0^{t}\int_{u_1}^{t}...\int_{u_1+...+u_{n-1}}^{t}d\textbf{u}\   \exp{\left[-\sum_{j=1}^{n}u_j \sum_{k=j}^{n} i_k \right]}
\end{equation*}
and,
\begin{equation*}\| I_{\Lambda_k}P^{i_k}_{\Lambda_k}\|_{cb}/\|I_{\Lambda_
k}\|_{cb}\leq \left\{\begin{array}{c} \beta ,\mbox{ if $i_k=1$}\\
				\alpha ,\mbox{ if $i_k=0$} \end{array}\right.
\end{equation*}
Using these two bounds $\mathcal{J}_{\mathcal{P}_n}$ becomes,
\begin{align*}
\mathcal{J}_{\mathcal{P}_n}
&\leq 2\|A\|\ \|B\|\ \int_0^{t}\int_{u_1}^{t}...\int_{u_1+...+u_{n-1}}^{t} d\textbf{u}\  \left(\alpha+\beta\exp(-\nu u_1)\right)...\left(\alpha+\beta\exp(-\nu u_n)\right)\\
& = 2\|A\|\ \|B\|\frac{1}{(n-1)!}\ \int_0^{t}du_1\ \left(\alpha+\beta\exp(-\nu u_1)\right)\left(\alpha (t-u_1)+\beta \left(\frac{\exp(-\nu u_1)-\exp(-\nu t)}{\nu}\right)\right)^{n-1}\\
&= 2 \|A\|\ \|B\|\ \frac{1}{n!}\left[\alpha t+\beta \left(\frac{1-\exp(-\nu t)}{\nu}\right)\right]^n
\end{align*}
The first equality is the induction hypothesis, the case $n=1$ can be checked easily. Combining our results, we get our claim.
\end{proof}
\begin{Lemma}
For  $A$ and $I_Z$ with $Z \cap \Lambda_A =\emptyset$,
\begin{equation*}
\|\ I_Z \exp(tL)A\ \| \leq \sum_{n=1}^\infty \|L\|_\mu^n \left(\sup\limits_{\mathcal{P}_n}\mathcal{J}_{\mathcal{P}_n}\right) \sum\limits_{\mathcal{P}_n}\sum\limits_{x_0\in \Lambda_A}\prod_{\Lambda_j \in \mathcal{P}_n}\sum_{x_j \in \Lambda_j}\sum\limits_{x_{n+1} \in Z}F(d(x_{j-1},x_j)) e^{-\mu d(x_{j-1},x_j)}
\end{equation*}
\begin{proof}
The procedure is very similar to \cite{PhysRevLett.93.140402,2010arXiv1004.2086N}. Consider the generator $L_{\cap Z}$,
$$L_{\cap Z} =\sum_{\Lambda_1\cap Z\not=\emptyset}I_{\Lambda_1} $$
Using equation (\ref{exp2}) , we can then expand,

$$I_Z \exp(tL)A = \sum_{\Lambda_1\cap Z\not=\emptyset}\|I_{\Lambda_1}\|_{cb}\int_0^t I_Z\exp[s_0 L] I_{\Lambda_1}/\|I_{\Lambda_1}\|_{cb} \exp[(t-s_0)(L-L_{\cap Z})]A$$
We then continue this procedure for each term $I_{\Lambda_1} \exp[(t-s_0)(L-L_{\cap Z})]A$  for which $\Lambda_1 \cap \Lambda_A =\emptyset$. Let us say that after $k-1$ steps we have $\Lambda_k \cap \Lambda_A \not=\emptyset$, from our construction we see that we have built a path of $k$ steps from $Z$ to $A$ for which $\mathcal{J}_{\mathcal{P}_k(Z,A)}$ can be bounded by equation (\ref{Jpn}). For each $n$, we can then use the definition of $\|L\|_{\mu}$,
\begin{align*}
\left(\sup\limits_{\mathcal{P}_n}\mathcal{J}_{\mathcal{P}_n}\right)\sum_{\mathcal{P}_n}\prod_{\Lambda_j \in \mathcal{P}_n}\sum_{\Lambda_j}\|I_{\Lambda_j}\|_{cb}\leq \left(\sup\limits_{\mathcal{P}_n}\mathcal{J}_{\mathcal{P}_n}\right) \sum\limits_{\mathcal{P}_n} \|L\|_{\mu}^n \sup\limits_{x_0\in \Lambda_A}\prod_{\Lambda_j \in \mathcal{P}_n}\sum_{x_j \in \Lambda_j}\sum\limits_{x_{n+1} \in Z}F(d(x_{j-1},x_j))  e^{-\mu d(x_{j-1},x_j)}
\end{align*}
\end{proof}
\end{Lemma}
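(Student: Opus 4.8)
The plan is to generate the path expansion by iterating the Duhamel identity (\ref{exp2}) and to use strict locality to discard, at each stage, the term in which no new interaction is produced. The only structural inputs needed are the annihilation rule $I_X[A_Y]=0$ for $X\cap Y=\emptyset$ and the elementary observation that a generator assembled solely from interactions disjoint from a region $W$ cannot transport operator support into $W$. First I would isolate the interactions adjacent to $Z$: writing $L=(L-L_{\cap Z})+L_{\cap Z}$ with $L_{\cap Z}=\sum_{\Lambda_1\cap Z\neq\emptyset}I_{\Lambda_1}$ and applying (\ref{exp2}) produces a free term $I_Z\exp[t(L-L_{\cap Z})]A$ together with an integral term. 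The free term vanishes, since every interaction in $L-L_{\cap Z}$ has support disjoint from $Z$; expanding $\exp[t(L-L_{\cap Z})]$ in its Dyson series and inducting on the number of factors shows that $\exp[t(L-L_{\cap Z})]A$ keeps its support disjoint from $Z$ (it starts disjoint because $\Lambda_A\cap Z=\emptyset$), so the annihilation rule gives $I_Z\exp[t(L-L_{\cap Z})]A=0$. This leaves
\begin{equation*}
I_Z\exp(tL)A=\sum_{\Lambda_1\cap Z\neq\emptyset}\|I_{\Lambda_1}\|_{cb}\int_0^t ds_0\ I_Z\exp[s_0L]\,\frac{I_{\Lambda_1}}{\|I_{\Lambda_1}\|_{cb}}\,\exp[(t-s_0)(L-L_{\cap Z})]A ,
\end{equation*}
where factoring out $\|I_{\Lambda_1}\|_{cb}$ already exhibits the normalized factor appearing in (\ref{Jpn}).

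Second I would iterate. For any resulting term whose last selected interaction still satisfies $\Lambda_j\cap\Lambda_A=\emptyset$, I apply (\ref{exp2}) once more to the inner propagator, peeling off the interactions that meet the most recently produced region and updating the free generator to remove every interaction meeting the portion of the path constructed so far, namely $L^c_{\mathcal{P}_n^{j}}$. The same support/annihilation argument, now with $W$ equal to the current frontier, kills the new free term and forces a genuinely adjacent interaction to appear, so the selected supports form a connected path $\mathcal{P}_n$ from $Z$ to $\Lambda_A$. The recursion halts at the first stage where the chosen interaction meets $\Lambda_A$; the telescoping of the Duhamel limits reproduces the nested domain $\int_0^t\int_0^{t-s_1}\cdots\int_0^{t-\sum_{j<n}s_j}d\textbf{s}$, and the ordered string of free evolutions and normalized interactions is exactly the integrand of (\ref{Jpn}). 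Taking operator norms and bounding the integrand by its supremum over paths of length $n$ gives
\begin{equation*}
\|I_Z\exp(tL)A\|\leq\sum_{n=1}^\infty\Big(\sup_{\mathcal{P}_n}\mathcal{J}_{\mathcal{P}_n}\Big)\sum_{\mathcal{P}_n}\prod_{\Lambda_j\in\mathcal{P}_n}\|I_{\Lambda_j}\|_{cb} .
\end{equation*}

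Finally I would pass from interaction norms to the reproducing function. For a single step with endpoints $x_{j-1}\in\Lambda_{j-1}$ and $x_j\in\Lambda_j$, the definition of $\|L\|_\mu$ yields $\sum_{\Lambda_j\ni x_{j-1},x_j}\|I_{\Lambda_j}\|_{cb}\leq\|L\|_\mu\,F(d(x_{j-1},x_j))\,e^{-\mu d(x_{j-1},x_j)}$; applying this at each of the $n$ steps replaces $\sum_{\mathcal{P}_n}\prod_j\|I_{\Lambda_j}\|_{cb}$ by $\|L\|_\mu^n$ times the sum of $\prod_j F(d(x_{j-1},x_j))e^{-\mu d(x_{j-1},x_j)}$ over intermediate sites $x_j\in\Lambda_j$ with endpoints fixed at $x_0\in\Lambda_A$ and $x_{n+1}\in Z$. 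This is exactly the right-hand side claimed, and the series converges for all $t$ since $\|L\|_\mu<\infty$ and $\|F\|<\infty$, while $\mathcal{J}_{\mathcal{P}_n}$ carries the $t^n/n!$ decay of the nested integral.

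I expect the main obstacle to be the vanishing of the free term at every level of the recursion together with the correct identification of the accompanying generator. One must verify carefully that at each stage the complementary generator never propagates support onto the current frontier, so that the annihilation rule applies and the expansion collapses onto connected paths rather than arbitrary strings of interactions, and one must keep the removed interaction set synchronized with the definition of $L^c_{\mathcal{P}_n^{j-1}}$ in (\ref{Jpn}) so that the later dissipation estimate can be invoked. A secondary but routine point is checking that the telescoped integration limits and operator ordering reproduce (\ref{Jpn}) exactly and that the counting of choices at each step yields precisely the sum $\sum_{\mathcal{P}_n}$ over paths.
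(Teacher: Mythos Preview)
Your proposal is correct and follows essentially the same approach as the paper: iterate the Duhamel identity (\ref{exp2}) starting from the interactions adjacent to $Z$, use the annihilation rule to kill the free term at each stage, and then convert the accumulated product of $\|I_{\Lambda_j}\|_{cb}$ into the reproducing-function weight via the definition of $\|L\|_\mu$. The paper's argument is terse and defers the standard parts to references, while you spell out explicitly why the free term $I_Z\exp[t(L-L_{\cap Z})]A$ vanishes and flag the bookkeeping needed to match the removed-interaction set with $L^c_{\mathcal{P}_n^{j-1}}$; these are exactly the points the paper leaves implicit.
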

We can then use this procedure for $[B,.]$.

\section{Examples}
\subsection{Specific Subclass}
We present a method to construct examples that satisfy the equations (\ref{structure}) for the case $P_{X}=\frac{\mathbb{1}_{X}}{\operatorname{Tr}({\mathbb{1}_{X}})}\operatorname{Tr}_X[.]$ and $X$ consists of at most two sites. For this construction $\alpha =0$ and the Lieb-Robinson velocity goes asymptotically to zero. 

Consider $2n$ trace-preserving complete positive operators $\Gamma^j,\Phi^j:B(\mathcal{C}^d)\to B(\mathcal{C}^d)$ and $c_1,...,c_n \in (0,1]$ with $\sum_k c_k =1$, so that $\Gamma_j^*$ and $\Phi_j^*$ are not trace-preserving and $\sum_{j} c_j \Gamma^j \otimes \Phi^j$ only has the uniform state as fixed point. We can then easily see that interaction terms $I_{i,j}$ of the form,
$$I_{i,j}[.]= \sum_k c_k \Gamma^k_{i}\otimes \Phi^k_{j}[.] -[.] $$
satisfy the structural equation (\ref{structure}).

Let us give some concrete examples. 
For $A \in  B(\mathcal{C}^2)$, $\lambda, t \in [-1,1]$, consider the following trace-preserving map 
$\Phi(\lambda,t):B(\mathcal{C}^2)\to B(\mathcal{C}^2)$.
$$\Phi(\lambda,t):\left(\begin{array}{cc}A_{11}& A_{12}\\ A_{21}& A_{22}\end{array}\right)\to \left(\begin{array}{cc}  \frac{1}{2}(A_{11}+A_{22})&\frac{t}{2} (A_{11}+A_{22})+\lambda A_{12} \\\frac{t}{2} (A_{11}+A_{22})+\lambda A_{21} & \frac{1}{2}(A_{11}+A_{22})\end{array}\right) $$
From \cite{2001quant.ph..1003R}, the condition
 $|\lambda| + |t|<1$ implies complete positivity.
We can then construct the following interaction terms,
$$ I_{i,j}[.]= \frac{1}{3} \Phi_i(\lambda_1,t_1) \otimes \Phi_{j}(\lambda_2,t_2)[.] +\frac{1}{3} \Phi_i(\lambda_3,s)\otimes \Phi_{j}(\lambda_4,s)[.]+\frac{1}{3} \Phi_j(\lambda_5,r)\otimes \Phi_{j}(\lambda_6,-r)[.]-[.]$$
with $t_1+r+s=t_2-r+s=r^2-s^2=0$. 

As a second example consider the map,
$\Psi(\lambda,t):B(\mathcal{C}^2)\to B(\mathcal{C}^2)$.
$$\Psi(\lambda,t):\left(\begin{array}{cc}A_{11}& A_{12}\\ A_{21}& A_{22}\end{array}\right)\to \left(\begin{array}{cc}  \frac{1+
\lambda +t}{2}A_{11}+ \frac{1-\lambda +t}{2}A_{22}& 0 \\0 & \frac{1-\lambda -t}{2}A_{11}+ \frac{1+\lambda-t}{2}A_{22}\end{array}\right) $$
with $-1<t,\lambda< 1,\ 1-|\lambda|-|t|>0$.
Given $-1<r_1,r_2,s_1,s_2, t_1,t_2, u_1,u_2<1$, so that,
$$\operatorname{det}\left(\begin{array}{cccc}(1+r_1)(1+r_2)& (1+s_1)(1+s_2) &  (1+t_1)(1+t_2) &(1+u_1)(1+u_2)\\ (1+r_1)(1-r_2)& (1+s_1)(1-s_2) &  (1+t_1)(1-t_2) &(1+u_1)(1-u_2)\\
(1-r_1)(1+r_2)& (1-s_1)(1+s_2) &  (1-t_1)(1+t_2) &(1-u_1)(1+u_2)\\
(1-r_1)(1-r_2)& (1-s_1)(1-s_2) &  (1-t_1)(1-t_2) &(1-u_1)(1-u_2)\end{array}\right)\not=0$$
And the interactions are given by,
 \begin{align*}
 I_{i,j}[.]= & c_1 \Psi_i(\lambda_1,r_1) \otimes \Psi_{j}(\lambda_2,r_2)[.] +c_2 \Psi_i(\kappa_1,s_1)\otimes \Psi_{j}(\kappa_2,s_2)[.]+\\
 & c_3 \Psi_i(\mu_1,t_1)\otimes \Psi_{j}(\mu_2,-t_2)[.]+c_4 \Psi_i(\nu_1,u_1)\otimes \Psi_{j}(\nu_2,u_2)[.]-[.]
 \end{align*}

Let us finally remind the reader that a maximum range of the interaction is necessary, $\forall I_{i,j},\ d(i,j)\leq R <\infty$. This condition implies an exponential decay in distance in equation (\ref{eventh}). Long range interactions are possible in some cases, but mostly imply slower decays. Further discussion can be found in \cite{KomatHastings}.
\subsection{Local Covariance}

In the previous section, we gave a method for the construction of non-trivial examples. These examples give rise to an approximate event
 horizon or in other words are quasi-ultra-local.
An interesting question is how do ultra-local dynamics look like. In the case of automorphisms, we would need interaction terms to commute with each other. In this section we give an interesting class of ultra-local dynamics with non-commutating interaction terms.

We use the concept of local covariance to introduce this particular class. It can then be combined with the previous examples to construct additional quasi-ultra-local dynamics.

Given some group $G$ with a unitary representation $V_g:\mathcal{H}_x\to \mathcal{H}_x$, $\forall x$. We say that $I_X$ is local covariant if, $\forall i\in X$,
$$I_X[(\mathbb{1}_{X\backslash \{i\}}\otimes V_g)(.)(\mathbb{1}_{X\backslash \{i\}}\otimes V_g^\dag)]=(\mathbb{1}_{X\backslash \{i\}}\otimes V_g)I_X[.](\mathbb{1}_{X\backslash \{i\}}\otimes V_g^\dag)$$
Additionally, we say $\operatorname{Ker} (I_X + I_X^\dag)$ is invariant with respect to the representation $\{V_g\}$ of a group $G$ if $\forall A\in \operatorname{Ker}  (I_X + I_X^\dag),\forall i\in X$,
$$(\mathbb{1}_{X\backslash \{i\}}\otimes V_g) A (\mathbb{1}_{X\backslash \{i\}}\otimes V_g^\dag) = A$$
in other words $\{V_g\}$ is a local gauge symmetry of $\operatorname{Ker}  (I_X + I_X^\dag)$.

For every amenable group $G$ with invariant mean $\mu$ and representation $\{V_g\}$, we can define a projector $P_X$ on $\mathcal{A}_X$ onto the vector space with this representation as local gauge symmetry.
\begin{equation} 
P_X: \mathcal{A}_{X}\to \mathcal{A}_{X}:A\to \prod_{i\in X} P_i [A]
\label{proj}
\end{equation}
with,
$$P_i: A\to \sum_{g\in G}\mu(g) (V_g \otimes \mathbb{1}_{X \backslash \{i\}}) A (V_g^\dagger \otimes \mathbb{1}_{X \backslash \{i\}})$$
It can be easily checked, that equations (\ref{structure}) are satisfied when $I_X$ are locally covariant. Additionally we can see that every interaction terms $I_X$ commute with the projectors $P_Y$. This implies ultra-locality, $\forall A\in\mathcal{A}_{\Lambda_A}$,
$$\exp \left[t\sum_X I_X \right]A=\exp\left[t\sum_{X\cap \Lambda_A\not=\emptyset}I_X \right]A$$
Covariance properties of dynamics has been largely studied \cite{Holevo1993211,402000768,1064-5632-59-2-A10}.

\section{Further Discussions}

\subsection{Localization}

In \cite{BOE}, systems with time dependent disorder were studied. The time dependence was chosen so that the dynamics would be equivalent to one generated by a local Lindblad generator. Specifically the generator was of the form, 
\begin{equation}
L[.] =\sum_{j}i[h_{j,j+1},.]-\gamma\frac{\mathbb{1}_{j}}{\operatorname{Tr}({\mathbb{1}_{j}})}\operatorname{Tr}_j[.] 
\label{example}
\end{equation}
The second terms are local depolarizations.
They showed that when $\gamma$ scaled with the size of the system, the Lieb Robinson velocity is suppressed and we get a result similar to Anderson localization \cite{Anderson}.

Originally the Lieb Robinson bound was used to prove the existence of the thermodynamic limit. For the proof, it is important for the bound to be independent of the size of the system. Therefore the result in \cite{BOE} can only be seen as a hint for effective velocity. Indeed, the authors conjectured and showed numerically that for small $\gamma$, localization should still appeared. When the local depolarization is too weak, the information propagates again. 
Using the method developed in this article, we prove their conjecture.

Consider the following local dissipative dynamics,
$$L =L_{int}+L_{diss}=\sum_{\Lambda_j}I_{\Lambda_j}+\sum_j D_j$$
with $|\Lambda_j|>1$.
This notation means that $D_j$ are one-point local and play the role of dissipation, while $I_{\Lambda_j}$ are the interaction terms that are responsible for the hopping.

Denote $P_{\lambda_j}$ the projector onto the kernel of $I_{\Lambda_j}+I_{\Lambda_j}^\dag +\sum_{k\in \Lambda_j} (D_k+D^\dagger_k) $.
Define then the projector onto the kernel of the hermitian part all interactions except the first $k$ interactions along a path of length $n$,
$P^c_{\mathcal{P}_n^{k}}=\prod_{\Lambda_j\not \in \mathcal{P}_n^{k}}P_{\Lambda_j}$

Assume now the following,
\begin{align}
\forall \Lambda_j,\Lambda_k,~~ P_{\Lambda_j}P_{\Lambda_k}=P_{\Lambda_k}P_{\Lambda_j} \\
\forall \mathcal{P}_n,j,~~P^c_{\mathcal{P}_n^{j}} L^c_{\mathcal{P}_n^{j}}P^c_{\mathcal{P}_n^{j}}= P^c_{\mathcal{P}_n^{j}} L^c_{\mathcal{P}_n^{j}}
\label{assumption}
\end{align}

The method of Lemma 1 can then be repeated except only the interactions $I_X$ are taken out of the exponentials and not the one-point dissipative terms $D_j$. This leads us again to the equations,
\begin{equation*}
\|[B, \exp(tL)A]\| \leq  \sum_{n=1}^\infty \|L_{int}\|_\mu^n \left(\sup\limits_{\mathcal{P}_n}\mathcal{J}_{\mathcal{P}_n}\right) \sum\limits_{\mathcal{P}_n}\sum\limits_{x_0\in \Lambda_B}\prod_{\Lambda_j \in \mathcal{P}_n}\sum_{x_j \in \Lambda_j}\sum\limits_{x_{n+1} \in \Lambda_A}F(d(x_{j-1},x_j)) e^{-\mu d(x_{j-1},x_j)}
\end{equation*}
with,
\begin{align}
&\mathcal{J}_{\mathcal{P}_n} = \Big|\Big|\int_0^t \int_0^{t-s_1
}...\int_0^{t-\sum_{j=1}^{n-1}s_j} d\textbf{s}\ [B,.]  \  \prod_{j=1}^{n}\left(\exp\left[s_j L^c_{\mathcal{P}_n^{j-1}}\right]  I_{\Lambda_j}/\|I_{\Lambda_j}\|_{cb}\right) \exp[(t-\sum_{j=1}^{n}s_j) L^c_{\mathcal{P}_n^n}]A \Big|\Big|
\label{Jpn}
\end{align}

We can then insert again the projectors,

\begin{align*}
\mathcal{J}_{\mathcal{P}_n} = &\Big|\Big|\int_0^t \int_0^{t-s_1
}...\int_0^{t-\sum_{j=1}^{n-1}s_j} d\textbf{s}\ [B,.]  \  \prod_{j=1}^{n}\left(\exp\left[s_j L^c_{\mathcal{P}_n^{j-1}}\right](\mathbb{1}-P^c_{\mathcal{P}_n^{j-1}})  I_{\Lambda_j}/\|I_{\Lambda_j}\|_{cb}\right)\\ &\exp[(t-\sum_{j=1}^{n}s_j) L^c_{\mathcal{P}_n^n}](\mathbb{1}-P^c_{\mathcal{P}_n^{n}})A \Big|\Big|
\end{align*}

Since, $[L^c_{\mathcal{P}_n^k},P^c_{\mathcal{P}_n^{k}}]=0$, $\forall \mathcal{P}_n,k$ and using the assumption (\ref{assumption}), 
$$\|(\mathbb{1}-P^c_{\mathcal{P}_n^{j}})\exp\left[s_j(\mathbb{1}-P^c_{\mathcal{P}_n^{j}}) (L^c_{\mathcal{P}_n^{j}}+L^{c\dag}_{\mathcal{P}_n^{j}})(\mathbb{1}-P^c_{\mathcal{P}_n^{j}})/2\right](\mathbb{1}-P^c_{\mathcal{P}_n^{j}}) \|\leq \exp(-\lambda s_j) $$

Therefore,
$$\mathcal{J}_{\mathcal{P}_n}\leq \exp(-\lambda t) \frac{\|L_{int}\|_{\mu}^n t^n}{n!}$$

From which follows,
$$\|[A, \Gamma_t(B)] \|\leq C' \|A\|\ \|B\| \exp\left( \frac{-\lambda t + v t - d(A,B)}{\xi} \right)$$
Generally $\lambda < v$, however,
 in the case of one-point dissipative terms, it is possible to make $\lambda > v$.
In the case of our example (\ref{example}), we now show that the assumptions (\ref{assumption}) are satisfied and  $\lambda= \gamma$ and $v\leq 2 \max_j\|h_{j,j+1}\|$.

\begin{proof}
Notice first $P_{j,j+1}=\frac{\mathbb{1}_{j,j+1}}{\operatorname{Tr}({\mathbb{1}_{j,j+1}})}\operatorname{Tr}_{j,j+1}\otimes \operatorname{Id}_{\Lambda\backslash \{j,j+1\}}$.
We can take some irreducible representation $\{U_{g}\}$ of a group $G$ with invariant mean $\mu$ that lives on site $j$.
Since,
$$\frac{\mathbb{1}_{j,j+1}}{\operatorname{Tr}({\mathbb{1}_{j,j+1}})}\operatorname{Tr}_{j,j+1}[.] =\sum_{g,h} \mu(g) \mu(h) U_{g}\otimes  U_{h} [.]U_{g}^\dagger \otimes  U_{h}^\dagger $$
from the first assumption can be checked.

Notice then that,
$$(L^c_{\mathcal{P}_n^{j}}+L^{c\dag}_{\mathcal{P}_n^{j}})=-2\gamma  \sum_{k=x_A}^{x_B} P_j$$

from this then follows $\lambda = \gamma $
\end{proof}

Therefore we see that there is no hopping when $\gamma> 2\|h_{j,j+1}\|
, \forall j$.

\subsection{Exponential Clustering Theorem}

 In \cite{PhysRevLett.104.190401}, A general method was given for systems with unique invariant state and a dissipative gap.  For gapped frustration free-Hamiltonian, proving correlation properties is quite easy. Similarly, we give a simple proof of the correlation properties of frustration free local Markovian dynamics with a dissipative gap.

Consider a frustration free local Markovian dynamics $L=\sum_X I_X$ with unique invariant state $\rho_\beta$. By which we mean, $L^* \rho_\beta = I_X^* \rho_\beta =0$, $\forall X$. We say that $L$ has a dissipative gap, if there is some $\lambda>0$, so that $\forall A\in\mathcal{A}_\lambda$, with $\operatorname{Tr}(A)=0$,  
$$\|\exp(tL)A \|\leq \|A\| \exp(-\lambda t)$$
For every local Markovian dynamics, the usual Lieb-Robinson bound can be shown,
\begin{equation}
\|I_X\exp[tL]A]\|\leq C \|A\|\ \|I_X\|\ \exp\left[\frac{vt-d(A,X)}{\xi}\right]
\label{LRB2}
\end{equation}
Notice that it should be clear that for the case that dynamics are frustration-free, the following bound can be shown,
\begin{equation}
\|I_X^*\exp[tL^*][A \rho_\beta]\|\leq C \|A\|\ \|I_X\|\ \exp\left[\frac{vt-d(A,X)}{\xi}\right]
\label{LRB2}
\end{equation}
We now show that for every local observable $A\in\mathcal{A}_{\Lambda_A}, B\in\mathcal{A}_{\Lambda_B}$, 
$$|\operatorname{Tr}( \rho_\beta A B)-\operatorname{Tr}( \rho_\beta A)\operatorname{Tr}( \rho_\beta B)|\leq \|A\|\ \|B\|\ \exp\left(-d(A,B)/(\xi +v/\lambda)\right)$$

\begin{proof}
Without loss of generality, we can take $\operatorname{Tr}( A)=\operatorname{Tr}( \rho_\beta B)=0$.
Notice first that from the Lieb-Robinson bound (\ref{LRB2}) follows,
$$\|\exp(tL^*)[B\rho_\beta] -\exp(t L^*-t\sum_{X\cap \Lambda_A \not=\emptyset} I_X^* )[B\rho_\beta]\|\leq C' \max_X\|I_X\|\ \|B\| \exp( (vt-d(A,B))/\xi)$$

Indeed,
\begin{align*}
\|& \exp(tL^*)[B\rho_\beta] -\exp(t L^*-t\sum_{X\cap \Lambda_A \not=\emptyset} I_X^* )[B\rho_\beta]\|\\
&= \|
\int_0^t ds \exp( (t-s)L^*- (t-s)\sum_{X\cap \Lambda_A \not=\emptyset}I_X^*) \sum_{X\cap \Lambda_A \not=\emptyset}I_X^* \exp(sL^*) [B\rho_\beta]\|\\
&\leq C \int_0^t ds \max_X\|I_X\|\ \|B\| \exp( (vt-d(A,B))/\xi) 
\leq C' \max_X\|I_X\|\ \|B\| \exp( (vt-d(A,B))/\xi)
\end{align*}

Our claim then follows from,
\begin{align*}
|\operatorname{Tr}( \rho_\beta B A)|=|\operatorname{Tr}(\rho_\beta B \exp(t L-t\sum_{X\cap \Lambda_A \not=\emptyset} I_X )[A])=|\operatorname{Tr}( A \exp(t L^*-t\sum_{X\cap \Lambda_A \not=\emptyset} I_X^* )[\rho_\beta B] )|\\
\leq \|B\|\ \|\exp(tL)A\| +\|A\|\ \| \exp(tL^*)[\rho_\beta B]-\exp(t L^*-t\sum_{X\cap \Lambda_A \not=\emptyset} I_X^* )[\rho_\beta B]\|)
\end{align*}
\end{proof} 

It should be clear that the correlation length of systems with dynamics satisfying equation \ref{structure} is zero.

\subsection{Conclusion}

In this article, we improved the Lieb-Robinson bound for a class of local dissipative dynamics. For this class, the bound reflects the dissipative character of the dynamics. The bound proven in \cite{2011arXiv1103.1122N} only showed the spreading of local perturbation due to the local property of the dynamics. We showed, however, that when the interactions satisfy additional structural equations (\ref{structure}), this spreading can is slowed down due to the dissipative property.

Further extensions of this class could be of interest, especially the change of the bound under a local perturbation of the class.

In the last section, we also showed that our method is not only restricted to the proof of our particular class. We used to prove the conjecture in (REF). In order to find additional classes one has to find the right subspace. Subspace with constant spacing in the rate of dissipation lead to the exponential decrease of the velocity. If the spacing were to decrease, we get a different behavior fo the velocity. The difficulty now lies in constructing concrete examples.

\section*{ACKNOWLEDGMENTS}
We acknowledge financial support by the FWF project CoQuS No. W1210‐N1. The author thanks James D. Whitfield for helpful discussions.

\end{document}